\documentclass[a4paper,onecolumn,11pt,accepted=2023-07-31]{quantumarticle}
\pdfoutput=1
\usepackage[utf8]{inputenc}
\usepackage[english]{babel}
\usepackage[T1]{fontenc}
\usepackage{amsmath}
\usepackage{hyperref}
\usepackage{tikz}
\usepackage{lipsum}
\usepackage{amsthm}
\usepackage{mathtools}

\usepackage[normalem]{ulem}

\newtheorem{theorem}{Theorem}
\newtheorem{corollary}{Corollary}
\newtheorem{lemma}{Lemma}
\newtheorem{proposition}{Proposition}

\def\mc{\mathcal}

%\hyphenation{op-tical net-works semi-conduc-tor}

\begin{document}

\title{A Converse for Fault-tolerant Quantum Computation}%

%\author{Uthirakalyani~G$^\dagger$, Anuj~K.~Nayak$^\dagger$ and 	  Avhishek~Chatterjee

\author{Uthirakalyani~G$^\dagger$}
\thanks{ $^\dagger$ Equally contributing student authors.}
\affiliation{Department of Electrical Engineering, Indian Institute of Technology Madras, Chennai, India.}
\email{ee19d404@smail.iitm.ac.in}

\author{Anuj~K.~Nayak$^\dagger$}
\affiliation{Department of Electrical and Computer Engineering, University of Illinois at Urbana-Champaign, Urbana, USA.}
\email{anujk4@illinois.edu}

\author{Avhishek~Chatterjee}
\affiliation{Department of Electrical Engineering, Indian Institute of Technology Madras, Chennai, India.}
\email{avhishek@ee.iitm.ac.in.}

%\thanks{Uthirakalyani G and Avhishek Chatterjee are with the Department of Electrical Engineering, Indian Institute of Technology Madras, Chennai, India. Emails:\{ee19d404@smail,avhishek@ee\}.iitm.ac.in.}
%\thanks{Anuj K. Nayak is with the Department of Electrical and Computer Engineering, University of Illinois at Urbana-Champaign, Urbana, USA. Email:anujk4@illinois.edu.}

\maketitle

\begin{abstract}
As techniques for fault-tolerant quantum computation keep improving, it is natural to ask: what is the fundamental lower bound on space overhead? In this paper, we obtain a lower bound on the space overhead required for $\epsilon$-accurate implementation of a large class of operations that includes unitary operators. For the practically relevant case of sub-exponential  depth and sub-linear gate size, our bound on space overhead is tighter than the known lower bounds.  We obtain this bound by connecting fault-tolerant computation with a set of finite blocklength quantum communication problems whose accuracy requirements   satisfy a joint constraint. The lower bound on space overhead obtained here leads to a strictly smaller upper bound on the noise threshold for noise that are not degradable. Our bound directly extends to the case where noise at the outputs of a gate are non-i.i.d. but noise across gates are i.i.d. 
\end{abstract}

%\begin{keywords}
 %Fault-tolerance; { space overhead};  noise threshold
%\end{keywords}

\section{Introduction}
\label{sec:intro}
The idea of a quantum computer was proposed in \cite{Benioff1980}. The computational advantage of a quantum computer over its classical counterpart was first shown mathematically in \cite{Deutsch1985}, followed by \cite{DeutschJ1992}. However, this initial excitement for quantum  computers was confronted with a practical issue, noise in quantum circuits. 

To tackle noise in quantum circuits, the exciting area of quantum fault-tolerance emerged, thanks to the work in \cite{Shor1996}, followed by \cite{Steane1996,AharonovB1997,Kitaev1997} and many others. The seminal works in \cite{Shor1996,Steane1996,AharonovB1997,Kitaev1997} essentially showed that if the strength of the noise at the gates is below a threshold, almost accurate quantum computation can be realized. However, this is achieved at the cost of a poly-logarithmic increase in the size of the circuit with respect to the ideal circuit made of noiseless gates. In subsequent works, this poly-logarithmic space overhead has been improved. In \cite{Gottesman2014,Fawzi2020}, it was shown that a constant space overhead can be achieved if the noise strength is below a threshold. This naturally raises the question: what is the minimum space overhead requirement? 

Relatively fewer works have addressed the question of minimum space overhead requirement. However, as better and better codes are found, the interest in understanding the fundamental limit on space overhead is growing \cite{Razborov2004,KempeRUW2008,FawziMS2022}. 

In this paper, we obtain a lower bound on the required space overhead for a broad class of noise models. In the  practically likely regime of sub-exponential (in the number of input qubits) depth and sub-linear gate size, \cite{Gottesman2014,Fawzi2020}, our bound is strictly better than the existing bounds in \cite{HarrowN2003, Razborov2004,KempeRUW2008,FawziMS2022}. This bound is obtained by connecting the fault-tolerant computation problem to a set of finite blocklength communication problems whose accuracy requirements satisfy a joint constraint.

\subsection{Related Work}
\label{sec:contribution}
Influenced by the gradual improvements  in the space overhead of the error correcting schemes for quantum circuits \cite{Steane1996,AharonovB1997,Gottesman2014,Fawzi2020}, improved lower bounds on space overhead were sought.

Harrow and Nielsen obtained a threshold of $0.74$ for depolarizing noise \cite{HarrowN2003}. In \cite{Razborov2004}, Razborov obtained an improved gate size dependent threshold
$1 - \frac{1}{g}$, where $g$ is the gate size, i.e., the maximum number of inputs to a gate. Kempe et al. \cite{KempeRUW2008} improved this bound to $1 - \sqrt{2^{\frac{1}{g}}-1}$ for mixture of unitary gates of size $g$.  

Buhrman et al. \cite{BuhrmanCLLSU2006} and Virmani et al. \cite{VirmaniHP2005} showed classical simulability of noisy quantum circuits beyond a threshold under assumptions on special gate operations and noise. These indirectly showed that, under certain assumptions, a quantum computer loses its edge over a classical computer if the noise is more than a threshold.

Recently, Fawzi et al. \cite{FawziMS2022} obtained a threshold for quantum computation in terms of the quantum capacity of a channel with the same noise. Their model includes any arbitrary gate operation and allows free noiseless classical computation. Though their threshold does not depend on the gate size $g$, it is strictly better than the threshold in \cite{KempeRUW2008} for $g \ge 2$. Moreover, they provide a lower bound on the space overhead when the noise is below the threshold. 
%Some of the intermediate results obtained by Fawzi et al. in \cite{FawziMS2022} may be of interest in other problems, e.g., \cite{HircheRF2022}. 

%In spirit, our work is closest to \cite{FawziMS2022}. However, we take a different approach inspired by the information theoretic works on noisy classical circuits \cite{Pippenger1988,EvansS1999} and offer a tighter bound on { space overhead} for sub-exponential depth circuits. Though we were not aware of the work in \cite{FawziMS2022} until majority of our results were obtained, after reading \cite{FawziMS2022} we were influenced by the results and techniques developed in \cite{FawziMS2022} and reshaped our model and results accordingly. In particular, the introduction of noiseless classical computation in our model is influenced by \cite{FawziMS2022}. 

\subsection{Our Contribution}
The main result in this paper is a lower bound on the number of physical qubits needed for fault-tolerant  quantum computation with a quantum state of $d$ (logical) qubits. The bound is obtained by first establishing a connection between fault-tolerant computing  and  a set of  finite blocklength quantum communication problems \cite{KhatriW2020book} whose accuracy requirements share a joint constraint. Then, using results from finite blocklength quantum  communication \cite{KhatriW2020book} we obtain a converse, which is then optimized to obtain the final lower bound. Our approach is inspired by information theoretic converse for classical noisy circuits in \cite{Pippenger1988, EvansS1999}, but the techniques used are quite different.

In the likely practical scenario of sub-exponential circuit depth and constant gate size, our bound is tighter than the best known bound from \cite{FawziMS2022}. Specifically, if the input has $d$ qubits and the gate noise is given by the quantum channel $\mc{N}$, then \cite{FawziMS2022} shows that the number of physical qubits needed is $\frac{d}{Q(\mc{N})}$, where $Q(\mc{N})$ is the quantum capacity of the channel $\mc{N}$. We show that for sub-linear gate size, i.e., $g=o(d)$, the lower bound
is $\frac{d~g}{I_c(\mc{N}^{\otimes g})}$, where $I_c(\cdot)$ denotes the coherent information of the corresponding quantum channel. Since $Q(\mc{N}) = \sup_{k} \frac{I_c(\mc{N}^{\otimes k})}{k}$, our bound is tighter. %For reasonable $g$, computation of $I_c(\mc{N}^{\otimes g})$, though time consuming, is possible and its complexity is lower than complexity of computing $Q(\mc{N})$.

An implication of our bound is an improved upper bound on noise threshold for a broad class of noise and gate models. For noise that are not degradable, this upper bound is strictly lower than the known ones and depends on gate size $g$ like \cite{Razborov2004,KempeRUW2008}. Interestingly, in the case where noise at a gate are correlated but noise across gates are i.i.d., bounds on space overhead and noise threshold can be obtained by replacing  $I_c(\mc{N}^{\otimes g})$ with $I_c(\mc{N}^{(g)})$ in our bound. Here, $\mc{N}^{(g)}$ is an arbitrary non-i.i.d. noise at the output of a gate of size $g$.

\subsection{Organization}
In Sec.~\ref{sec:model}, the quantum computation model is presented. The main result is presented in Sec.~\ref{sec:result} followed by a discussion about its implications. Proof of the main result is presented in Sec.~\ref{sec:proofLB}, which draws on a few intermediate results, whose proofs are  presented in Appendix~\ref{sec:lipschitzLB} and \ref{sec:lipschitzLemma}. We conclude in Sec.~\ref{sec:conclusion} with a short discussion on interesting directions for further explorations.

\section{Quantum Computation Model}
\label{sec:model}

We consider a fault-tolerant quantum circuit whose goal is to compute a function $f(\cdot)$ on a $d$-qubit input, i.e., a $2^{d}$ dimensional input. At the input to the circuit, there are total $N$ physical qubits on which quantum operations are performed sequentially in layers (Fig.~\ref{fig:computationModel}). The initial input, $\rho^{(d)}$, is placed on $d$ physical qubits and the remaining $N-d$ physical qubits are ancillas, which can be used for error correction. Thus, the initial state of the $N$ physical qubits is given by $\rho^{(N)}=\rho^{(d)} \otimes \left(\otimes_{a=1}^{N-d}\sigma_a \right)$, where $\sigma_a$ are ancilla qubits.

\begin{center}
\begin{figure}[h]
    \includegraphics[scale=0.5]{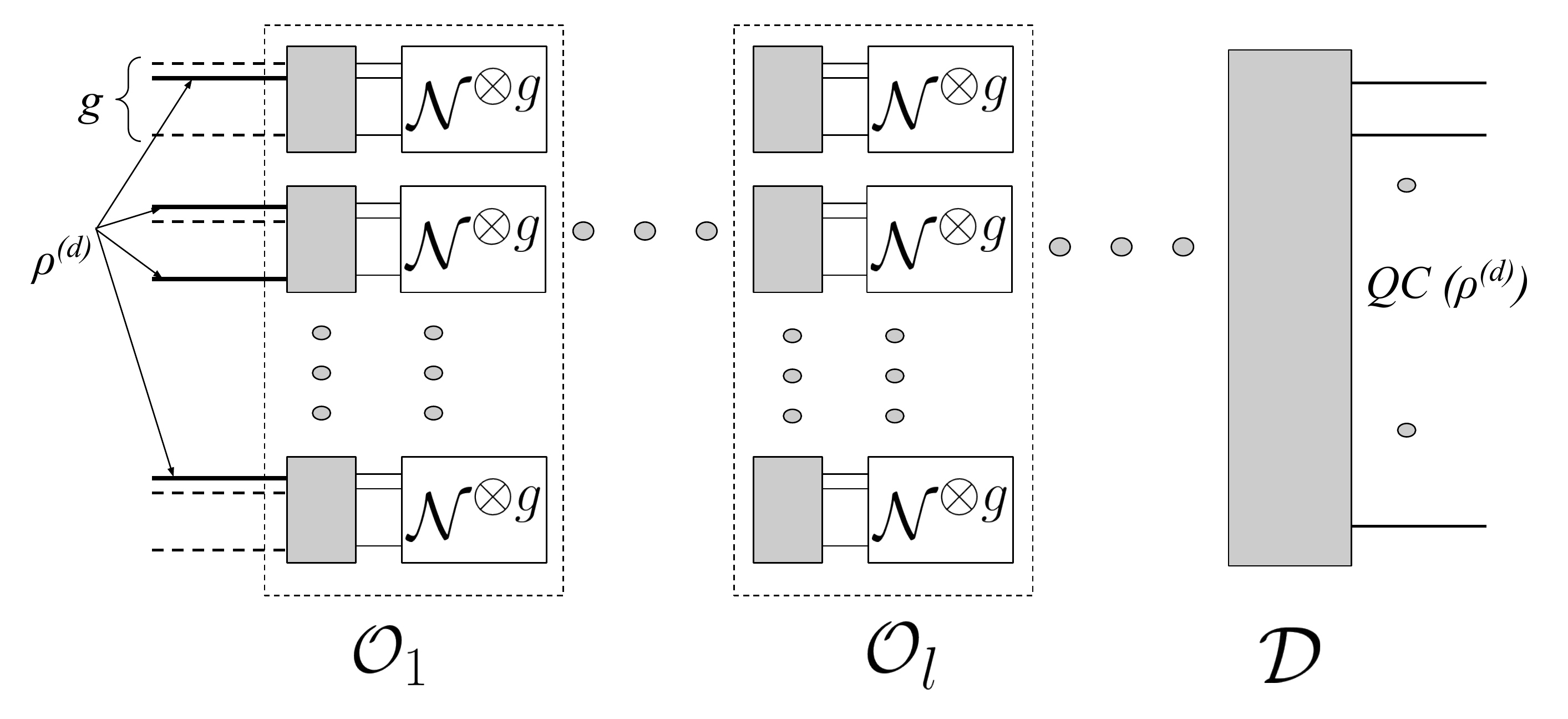}
    \caption{A schematic of the quantum computation model.}
    \label{fig:computationModel}
\end{figure}
\end{center}

 A quantum circuit of depth $D$ has $D$ layers and does $D$ operations on $\rho^{(N)}$ in a sequential manner. That is, given the quantum operations $\mathcal{O}_l$ corresponding layers $1 \le l \le D$, the output at the end of the final layer $D$ is $\mc{O}_D \circ \mc{O}_{D-1} \cdots \circ \mc{O}_1 \left(\rho^{(N)}\right)$.

The operation $\mc{O}_l$ in layer $l$ is realized using noisy quantum gates of size $g$. If a set of qubits are not processed by any gate at a particular layer, we model them to be processed by a dummy $g$-input identity gate. Thus, in the first layer, there are at least $\lceil\frac{N}{g}\rceil$ gates, including the dummy identity gates. Like \cite{Razborov2004,FawziMS2022}, at each layer, introduction of fresh ancilla qubits and  noise free classical computations are allowed.

Following prior work \cite{Razborov2004,KempeRUW2008,FawziMS2022}, a noisy gate is modeled by a perfect gate followed by $g$ i.i.d. noisy quantum channels $\mc{N}$, given by $\mc{N}^{\otimes g}$. Noise across gates and layers are assumed to be i.i.d.

Finally, an arbitrary noiseless quantum operation $\mc{D}$ is allowed after the final layer. It maps $\mc{O}_D \circ \mc{O}_{D-1} \cdots \circ \mc{O}_1 \left(\rho^{(N)}\right)$ to a quantum state of the same dimension as that of $f(\rho^{(d)})$. This operation is equivalent to the {\em noiseless} decoder in \cite{FawziMS2022} and the partial trace operation in \cite{Razborov2004}. We refer to the final output of the noisy circuit after the operation $\mc{D}$ as $QC(\rho^{(d)})$.

\subsection{Criteria for fault-tolerance}
In the above computational model, the criteria for $\epsilon$-accurate (or fault-tolerance) computation of $f(\cdot)$ using $QC(\cdot)$ is
\begin{align}
\mbox{C0: }\forall \rho^{(d)}~\mbox{(}d\mbox{-qubit state )}, F(QC(\rho^{(d)}),f(\rho^{(d)})) > 1-\epsilon,
\label{eq:cri0}
\end{align}
where, $F(\cdot,\cdot)$ is the standard fidelity \cite{KhatriW2020book}.

The usual arguments for achieving fault-tolerance against sufficiently small but constant noise using concatenated codes \cite[Ch. 10.6.1]{NielsenC2010} imply that criteria C0 can be achieved. In particular, it can be achieved using concatenated codes and poly-logarithmic space overhead for any sufficiently small $\epsilon>0$, where for any state $\sigma$ and a completely positive trace preserving map $\mc{E}$, $\mc{N}(\sigma)=(1-\epsilon) \sigma + \epsilon~\mc{E}(\sigma)$. Space overhead may be improved using better fault tolerance schemes like \cite{Gottesman2014,Fawzi2020}. However, the focus of this work is on converse results: a lower bound on required space overhead and an upper bound on noise threshold. Our computational model, fault-tolerance criteria and objective are similar to that in \cite{FawziMS2022,Razborov2004}.

\section{Converse for $\epsilon$-accurate Computing}
\label{sec:result}
The following proposition gives a lower bound on the number of physical qubits needed for fault-tolerant computation.
\begin{proposition}
\label{prop:LB}
There exists a class of functions on $d$-qubit states such that the number of physical qubits needed for implementing $\epsilon$-accurate circuits for computing these functions is lower bounded by 
\begin{align}
\frac{d}{\frac{I_c(\mc{N}^{\otimes g})}{g}+\frac{1}{2(d-g)~{\ln 2}}\left(\ln\frac{4d}{g}+\frac{8}{7}\right)}  - 2g, \nonumber
\end{align}
for $\epsilon\in (0,0.11)$ and $d \ge 2g$. Here, $I_c(\mc{N}^{\otimes g})$ is the coherent information of the channel $\mc{N}^{\otimes g}$.
\end{proposition}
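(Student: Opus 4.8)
The plan is to reduce the fault-tolerance requirement to a collection of finite-blocklength quantum communication converses, one for each qubit of the logical input, and then combine them under a joint accuracy budget. First I would pick the hard class of functions to be (essentially) the identity channel on $d$ qubits, possibly composed with a fixed unitary, so that C0 forces the noisy circuit $QC(\cdot)$ to transmit an arbitrary $d$-qubit state with fidelity $>1-\epsilon$; the point of allowing an arbitrary class is that at least one such function must be hard for any circuit of the given size. The physical picture is that the $N$ physical wires carrying the state through the $D$ layers, with $\mc{N}^{\otimes g}$ noise attached to every gate, form a quantum channel from the $d$-dimensional logical input (embedded in $\rho^{(N)}$) to the decoder $\mc{D}$; the depth-$D$ composition means the information must survive many noise layers, but for a converse it suffices to use the noise injected in a single well-chosen layer — concretely the first layer, which contains at least $\lceil N/g\rceil$ gates and hence at least $N$ independent copies of $\mc{N}$.

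Next I would invoke the finite-blocklength converse for quantum communication from \cite{KhatriW2020book}: if a code of blocklength $n$ over a channel with per-use coherent information structure transmits $k$ qubits with fidelity $1-\epsilon$, then $k$ is bounded by $n$ times (coherent information per channel use) plus a finite-blocklength penalty term of order $\sqrt{n}$ or $\log n$ scaled by a function of $\epsilon$. Here the "blocklength" is governed by how the $N$ physical qubits are partitioned into gates of size $g$: each gate contributes one use of $\mc{N}^{\otimes g}$, so there are $\approx N/g$ uses of the channel $\mc{N}^{\otimes g}$, each carrying coherent information at most $I_c(\mc{N}^{\otimes g})$. The catch is that the $d$ logical qubits are not encoded into one big block but are spread across the wires together with $N-d$ ancillas; this is exactly where the "joint constraint" enters. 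I would set up, for the relevant sub-blocks, individual finite-blocklength inequalities $d_i \lesssim n_i \frac{I_c(\mc{N}^{\otimes g})}{g} + (\text{penalty depending on } n_i,\epsilon_i)$ and then a constraint tying the $n_i$'s to $N$ and the $\epsilon_i$'s to the single budget $\epsilon$ (via a union/telescoping or a data-processing argument over the $d-g$ "rounds" suggested by the $\frac{1}{2(d-g)\ln 2}$ factor and the $\ln\frac{4d}{g}$ in the statement). Summing and using concavity/monotonicity of $I_c$ in the block structure then yields $d \le \big(N+2g\big)\big(\frac{I_c(\mc{N}^{\otimes g})}{g} + \frac{1}{2(d-g)\ln2}(\ln\frac{4d}{g}+\frac{8}{7})\big)$, which rearranges to the claimed bound on $N$; the additive $-2g$ and the $d\ge 2g$, $\epsilon<0.11$ restrictions come from boundary effects of the partition and from the range in which the finite-blocklength converse constant (the $\frac{8}{7}$ and the $0.11$) is valid.

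The main obstacle I anticipate is making the reduction from the full depth-$D$ noisy circuit to a clean single-shot channel-coding instance without losing the gate-size dependence — i.e., showing rigorously that the adversary cannot "dilute" the logical information across so many ancillas and gates that the per-gate coherent information $I_c(\mc{N}^{\otimes g})$ is not the right per-use quantity, and that noiseless classical side information and fresh ancillas introduced at each layer (allowed by the model) do not help beyond what data processing already accounts for. Handling the arbitrary noiseless decoder $\mc{D}$ and the freedom to inject ancillas mid-computation is what forces the argument through a data-processing / coherent-information monotonicity chain rather than a direct capacity statement, and getting the finite-blocklength penalty term to come out with the precise constants in the proposition (rather than a loose $O(\sqrt{n})$) is the delicate quantitative step; I expect the intermediate results deferred to Appendices~\ref{sec:lipschitzLB} and \ref{sec:lipschitzLemma} — presumably a Lipschitz-type continuity bound for $I_c$ under $\epsilon$-perturbations and a second-order expansion of the converse — to do exactly this work, and I would structure the proof to call them as black boxes: one lemma converting fidelity $>1-\epsilon$ into an $I_c$ deficit, one converting the deficit into a bound on $N$, then an optimization over the free partition parameter to land on the stated expression.
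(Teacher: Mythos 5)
Your high-level architecture does match the paper's: restrict to functions whose inverse is well behaved (in particular unitaries), so that C0 forces near-perfect transmission of the logical state; use only the noise injected in the first layer; obtain one finite-blocklength converse per gate; and combine these under a joint accuracy budget before optimizing. However, the mechanisms that actually make this work are missing or misidentified in your plan. The step that converts the single global fidelity constraint into per-gate constraints is not a union bound, telescoping, or data-processing argument over ``$d-g$ rounds'': the paper restricts the universal quantifier over inputs to tensor-product inputs $\rho^{(d)}=\otimes_{i=1}^G\rho_{d_i}$ aligned with the first layer's gate partition, argues from independence of the noise across gates that the recovery map may be taken in product form $\otimes_i \mc{H}_i$, and then uses multiplicativity of fidelity on tensor products to split the budget into per-gate accuracies $\epsilon_i$ with $\prod_i(1-\epsilon_i)\ge 1-\epsilon L$. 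The factor $\frac{1}{2(d-g)\ln 2}$ you attribute to ``rounds'' is only the elementary estimate $\frac{G}{G-1}\le\frac{d}{d-g}$ following from $G\ge d/g$. Moreover, the model's free noiseless classical computation must be addressed explicitly; the paper invokes the Barnum--Knill--Nielsen fact that classical assistance improves fidelity at most from $1-\eta$ to $1-\eta/2$, relaxing the joint constraint to $\prod_i(1-\tfrac{\epsilon_i}{2})\ge 1-\epsilon L$ -- you flag this as an obstacle but leave it unresolved.

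The quantitative core you defer to ``black boxes'' is also not what you guess: the appendices contain no continuity (Lipschitz) bound for $I_c$ nor a second-order expansion; the ``Lipschitz'' condition is on $f^{-1}$, not on coherent information. Each first-layer gate is a single use of $\mc{N}^{\otimes g}$, bounded by the one-shot converse $d_i\le\frac{1}{1-2\epsilon_i}\left(I_c(\mc{N}^{\otimes g})+h_2(\epsilon_i)\right)$, and the stated constants come from an explicit optimization over the allocation $\{\epsilon_i\}$: relax the product constraint to $\sum_i\epsilon_i\le 2\ln\frac{1}{1-\epsilon L}$; the $I_c$ part is a convex maximization, hence attained at an extreme point (one gate takes the whole budget), which after $\frac{1}{1-4\ln\frac{1}{1-\epsilon L}}\le 2$ produces the additive $-2g$; the $h_2$ part is a concave maximization attained at the symmetric point $\epsilon_i=\frac{2}{G}\ln\frac{1}{1-\epsilon L}$, which together with $h_2(x)\ln 2\le x\left(\ln\frac{1}{x}+\frac{1}{1-x}\right)$, $G\ge d/g$, $N\ge(G-1)g$ and $d\ge 2g$ yields the $\frac{1}{2(d-g)\ln 2}\left(\ln\frac{4d}{g}+\frac{8}{7}\right)$ term. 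Without the product-input reduction, the product-form decoder, the classical-assistance factor, and this extreme-point/symmetric-point split, your outline does not yet produce the stated bound, and the threshold $\epsilon<0.11$ (i.e.\ $\epsilon<1-e^{-1/8}$ so that $L=1$ is admissible) has a specific origin that your sketch does not recover.
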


\begin{proof}
    { The proof of Prop.~\ref{prop:LB} is given in Sec.~\ref{sec:proofLB}.}
\end{proof}
First, we would like to mention that the class of functions in Proposition~\ref{prop:LB} includes all unitary transformations. Hence, this class of functions can simulate evolutions of physical systems and implement  important quantum computation modules like quantum Fourier transform. 

The bound in Proposition~\ref{prop:LB} directly extends to circuits where noise at the outputs of a gate are not independent, but noise are i.i.d. across gates. In that case, instead of $I_c(\mc{N}^{\otimes g})$, the bound would have $I_c(\mc{N}^{(g)})$, where $\mc{N}^{(g)}$ represents the potentially correlated noise acting on the $g$-qubit output of a gate. This would be evident from the proof of Proposition~\ref{prop:LB}.

\subsection{Lower bound on space overhead}
Next, we consider circuits where $D$ is sub-exponential in $d$ and the gate size $g=o(d)$ since the probable practical implementations lie in this regime. The following corollary, which follows directly from Proposition~\ref{prop:LB}, gives a lower bound on the minimum required space overhead in this regime.

\begin{corollary}
\label{cor:LB}
For $g=o(d)$  and $\epsilon\in (0,0.11)$, there exist a class of functions that include unitary transformations,  such that the limiting space overhead, $\lim_{d\to \infty} \frac{N}{d}$, required for $\epsilon$-accurate implementation of those functions is lower bounded by $\frac{g}{I_c(\mc{N}^{\otimes g})}$.
\end{corollary}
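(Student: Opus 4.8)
The plan is to obtain Corollary~\ref{cor:LB} as a direct asymptotic consequence of the finite-$d$ bound in Proposition~\ref{prop:LB}. Since Proposition~\ref{prop:LB} already gives
\[
N \;\ge\; \frac{d}{\dfrac{I_c(\mc{N}^{\otimes g})}{g}+\dfrac{1}{2(d-g)\,\ln 2}\left(\ln\dfrac{4d}{g}+\dfrac{8}{7}\right)} - 2g,
\]
dividing through by $d$ yields
\[
\frac{N}{d} \;\ge\; \frac{1}{\dfrac{I_c(\mc{N}^{\otimes g})}{g}+\dfrac{1}{2(d-g)\,\ln 2}\left(\ln\dfrac{4d}{g}+\dfrac{8}{7}\right)} - \frac{2g}{d}.
\]
The whole argument is then just a limit computation as $d\to\infty$ with $g=o(d)$, tracking how each term behaves.

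The key steps, in order: First I would note that $g = o(d)$ forces $2g/d \to 0$, killing the additive correction. Second, I would control the second term in the denominator: since $g=o(d)$ we have $d-g \sim d$, and $\ln(4d/g) = \ln 4 + \ln(d/g)$ grows at most like $\ln d$ (it could be as large as $\Theta(\ln d)$ when $g$ is constant, but never faster), so the term $\frac{1}{2(d-g)\ln 2}\bigl(\ln\frac{4d}{g}+\frac{8}{7}\bigr)$ is $O\!\left(\frac{\ln d}{d}\right) \to 0$. Third, one must be slightly careful that $g$ itself may grow with $d$ (subject to $g=o(d)$), so strictly speaking $I_c(\mc{N}^{\otimes g})/g$ need not converge; the clean statement is that along any such sequence the denominator is $\frac{I_c(\mc{N}^{\otimes g})}{g} + o(1)$, hence
\[
\liminf_{d\to\infty}\frac{N}{d} \;\ge\; \liminf_{d\to\infty}\frac{1}{\frac{I_c(\mc{N}^{\otimes g})}{g}+o(1)} \;=\; \liminf_{d\to\infty}\frac{g}{I_c(\mc{N}^{\otimes g})},
\]
and in the common case of constant $g$ this is exactly $\frac{g}{I_c(\mc{N}^{\otimes g})}$, which is the stated bound. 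The class of functions and the fault-tolerance regime $\epsilon\in(0,0.11)$ are simply inherited verbatim from Proposition~\ref{prop:LB}, including the fact that the class contains all unitary transformations.

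I do not expect a genuine obstacle here — this is a routine "take the limit in the previous bound" corollary. The only point requiring a sentence of care is the edge case where $g$ is itself a slowly growing function of $d$: then $\frac{g}{I_c(\mc{N}^{\otimes g})}$ is a $d$-dependent quantity and the inequality should be read as holding for the sequence (a $\liminf$ statement), rather than as a statement about a single limiting number. For the intended reading of the corollary — constant $g$, or more generally any regime in which $\frac{I_c(\mc{N}^{\otimes g})}{g}$ stabilizes — the vanishing of both the $2g/d$ term and the $O(\ln d / d)$ term immediately gives $\lim_{d\to\infty}\frac{N}{d} \ge \frac{g}{I_c(\mc{N}^{\otimes g})}$, completing the proof.
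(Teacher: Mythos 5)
Your proposal is correct and matches the paper's route exactly: the paper states that Corollary~\ref{cor:LB} follows directly from Proposition~\ref{prop:LB}, and your argument is precisely that limit computation, with the $2g/d$ term and the $O(\ln d/d)$ denominator correction vanishing under $g=o(d)$. Your remark about reading the bound as a $\liminf$ statement when $g$ grows with $d$ is a reasonable clarification but not a departure from the paper's approach.
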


When $D$ is sub-exponential and $g=o(d)$, the best known lower bound on limiting space overhead is $\inf_{k\ge 1} \frac{k}{I_c(\mc{N}^{\otimes k})}$ \cite{FawziMS2022}. Note that this is equal to the bound in Corollary~\ref{cor:LB} when $\mc{N}$ is degradable \cite[\S 3.25]{KhatriW2020book}. However, for depolarizing noise, and in general, for noise that are not degradable, the bound in Corollary~\ref{cor:LB} is strictly higher.

Another interesting aspect of the bound in Corollary~\ref{cor:LB} is that it captures the effect of the gate size on minimum space overhead, which is not captured by the existing best bound $\inf_{k\ge 1} \frac{k}{I_c(\mc{N}^{\otimes k})}$.

\subsection{Upper bound on noise threshold}

Noise threshold for a  noise model parameterized by a single parameter (e.g., depolarizing) is defined as the strength of noise beyond which quantum computation is not possible. Its dependence on the gate size was captured in \cite{Razborov2004} and \cite{KempeRUW2008}. On the other hand, in \cite{FawziMS2022}, a tighter bound involving only quantum capacity was given. Here, we obtain a tighter upper bound on noise threshold in terms of both gate size and a quantity closely related to quantum capacity.

We state results for generic noise that may have multiple parameters and characterize the parameter region where fault-tolerant computation is not possible using reasonable space overhead. The following is a direct corollary of Proposition~\ref{prop:LB}.

\begin{corollary}
\label{cor:threshold}
For $g=o(d)$  and $\epsilon\in (0,0.11)$, there exist a class of functions that include unitary transformations, such that for any parameter region of the noise $\mc{N}$ where $I_c(\mc{N}^{\otimes g})=0$, $\epsilon$-accurate computation requires $\frac{N}{d}=\Omega(\frac{d}{\ln d})$, i.e, {sub-linear space overhead} (upto a logarithmic factor) is necessary.
\end{corollary}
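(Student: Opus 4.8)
The plan is to obtain Corollary~\ref{cor:threshold} as an immediate specialization of Proposition~\ref{prop:LB}. First I would note that in any parameter region of the noise with $I_c(\mc{N}^{\otimes g})=0$, the first summand in the denominator of the bound of Proposition~\ref{prop:LB} vanishes, so for every $d\ge 2g$ and $\epsilon\in(0,0.11)$ the number of physical qubits satisfies
\begin{align}
N \;\ge\; \frac{2\,d\,(d-g)\,\ln 2}{\ln\frac{4d}{g}+\frac{8}{7}} \;-\; 2g. \nonumber
\end{align}
The class of functions for which this holds --- which, as remarked after Proposition~\ref{prop:LB}, contains all unitaries --- is inherited verbatim, so nothing new has to be constructed.

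Next I would push this estimate through the asymptotics $g=o(d)$. Since $g=o(d)$, we have $d-g=d\,(1-o(1))$, the constraint $d\ge 2g$ holds for all sufficiently large $d$, and $\tfrac{4d}{g}\le 4d$, whence $\ln\tfrac{4d}{g}+\tfrac{8}{7}\le \ln(4d)+\tfrac{8}{7}$. Plugging these in gives
\begin{align}
N \;\ge\; \frac{2\,(1-o(1))\,d^2\,\ln 2}{\ln(4d)+\frac{8}{7}} \;-\; o(d) \;=\; \Omega\!\left(\frac{d^2}{\ln d}\right), \nonumber
\end{align}
because the subtracted term $2g=o(d)$ is dominated by $d^2/\ln d$. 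Dividing through by $d$ yields the claimed bound $\frac{N}{d}=\Omega\!\left(\frac{d}{\ln d}\right)$, i.e.\ the space overhead cannot stay bounded (nor even grow slower than $d/\ln d$) whenever the $g$-fold coherent information is zero.

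Since Corollary~\ref{cor:threshold} is a direct corollary, I do not expect a genuine obstacle. The only mildly delicate points are: (i) confirming that the crude estimate $\ln\tfrac{4d}{g}\le\ln(4d)$ suffices across the whole range of sublinear $g$ --- it does, since even constant $g$ makes the denominator $\Theta(\ln d)$, while faster-growing $g=o(d)$ only shrinks the denominator and strengthens the bound; and (ii) checking that neither the additive $-2g$ term nor the hypothesis $d\ge 2g$ interferes in the limit, both of which are handled by $g=o(d)$.
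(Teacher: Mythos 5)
Your proposal is correct and follows exactly the route the paper intends: Corollary~\ref{cor:threshold} is stated as a direct consequence of Proposition~\ref{prop:LB}, obtained by setting $I_c(\mc{N}^{\otimes g})=0$ in the denominator of the bound and using $g=o(d)$ to control $d-g$, the $\ln\frac{4d}{g}$ term, and the additive $-2g$. Your handling of the asymptotics (including the observation that larger sublinear $g$ only strengthens the bound) is sound, so there is nothing to add.
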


This corollary of Proposition~\ref{prop:LB}, unlike the threshold results in \cite{FawziMS2022, Razborov2004, KempeRUW2008}, does not give an impossibility result. Despite that it says something quite useful: constant  or poly-logarithmic space overhead, which are the gold standards for fault-tolerant schemes,  cannot be achieved when $I_c(\mc{N}^{\otimes g})=0$. 
In comparison, the best existing result says that linear or poly-logarithmic space overhead is not possible when $\sup_{k \ge 1} \frac{I_c(\mc{N}^{\otimes k})}{k}=0$ { \cite{FawziMS2022}}. 

Thus,  for noise that are not degradable, Corollary~\ref{cor:threshold} gives a strictly larger parameter region where no scheme can achieve poly-logarithmic space overhead.  An immediate implication of Corollary~\ref{cor:threshold} is an upper bound on noise threshold for depolarizing noise which is strictly lower than the best existing upper bound.

Next, we derive a much stronger phase transition of the required space overhead for the same parameter region as given by Corollary \ref{cor:threshold}. However, this result does not directly follow from Proposition~\ref{prop:LB} and is derived using intermediate results from the proof of Proposition~\ref{prop:LB}.

\begin{proposition}
\label{prop:threshold}
For $g=o(d)$  and $\epsilon\in (0,0.11)$, there exist a class of functions that include unitary transformations, such that for any parameter region of the noise $\mc{N}$ where $I_c(\mc{N}^{\otimes g})=0$, $\epsilon$-accurate computation for $\epsilon<0.11$ is not possible for $N=o(\exp(d))$.
\end{proposition}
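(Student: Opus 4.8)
The plan is to revisit the chain of inequalities that yields Proposition~\ref{prop:LB} and extract from it a sharper statement when $I_c(\mc{N}^{\otimes g})=0$, rather than simply substituting $I_c(\mc{N}^{\otimes g})=0$ into the final bound. Proposition~\ref{prop:LB} is proved by connecting the fault-tolerant circuit to a family of finite-blocklength quantum communication problems whose accuracy requirements share a joint constraint, and then invoking a finite-blocklength converse (presumably a one-shot quantum capacity / decoupling-type bound from \cite{KhatriW2020book}). That converse should produce an inequality of the schematic form $d \le (\text{something like } \tfrac{N}{g}+2)\, I_c(\mc{N}^{\otimes g}) + (\text{a finite-blocklength penalty term depending on } N, d, g, \epsilon)$. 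When $I_c(\mc{N}^{\otimes g})=0$ the leading term vanishes identically, so the entire logical dimension $d$ must be "paid for" by the penalty term alone. The key observation is that this penalty term, which in Proposition~\ref{prop:LB} was bounded crudely by something like $\frac{N}{2(d-g)\ln 2}(\ln\frac{4d}{g}+\frac{8}{7})$, actually scales only logarithmically (or polynomially) in $N$; hence to make it as large as $d$ one needs $\log N = \Omega(d)$, i.e. $N = \exp(\Omega(d))$. Contrapositively, if $N = o(\exp(d))$, the penalty term is $o(d)$ and cannot supply the required $d$ units, contradicting $\epsilon$-accuracy for $\epsilon < 0.11$.

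Concretely, the steps I would carry out are: (1) recall from Sec.~\ref{sec:proofLB} the intermediate inequality (before optimization over code parameters) that the finite-blocklength converse gives, keeping the finite-blocklength correction term in its sharpest available form rather than the loosened form used to state Proposition~\ref{prop:LB}; (2) set $I_c(\mc{N}^{\otimes g})=0$ in that inequality, so that it reduces to $d \le h(N,d,g,\epsilon)$ where $h$ captures only the finite-blocklength penalty; (3) verify that $h(N,d,g,\epsilon) = O(\mathrm{poly}(d,g)\cdot \log N)$ — this is where the Lipschitz-type estimates proved in Appendix~\ref{sec:lipschitzLB} and \ref{sec:lipschitzLemma}, together with the $g=o(d)$ and $\epsilon<0.11$ assumptions, are used; (4) conclude that $d \le O(\mathrm{poly}(d)\cdot \log N)$ forces $\log N = \Omega(d/\mathrm{poly}(d))$, and then sharpen the polynomial factors (again using $g=o(d)$) to obtain $\log N = \Omega(d)$, i.e. $N \neq o(\exp(d))$; (5) note, as in Proposition~\ref{prop:LB}, that the argument applies to a class of functions including all unitaries, and that the correlated-noise extension ($I_c(\mc{N}^{(g)})$ in place of $I_c(\mc{N}^{\otimes g})$) goes through verbatim.

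The main obstacle I anticipate is step~(3): making sure the finite-blocklength penalty term genuinely scales like $\log N$ (or at worst a small power of $N$ times logarithmic factors) and not linearly in $N$. In the loosened bound stated in Proposition~\ref{prop:LB} the penalty already appears with an explicit $\frac{1}{d-g}$ prefactor multiplying $N$, which is exactly the cancellation that makes the argument work; but one must be careful that this prefactor is not an artifact of a lossy step and that the tight intermediate inequality really has the form "$d \le (\text{coefficient})\cdot I_c(\mc{N}^{\otimes g}) + (\text{sublinear-in-}N\text{ term})$" uniformly. A secondary subtlety is the joint accuracy constraint across the family of communication problems: when $I_c=0$ one must ensure that the constraint does not itself introduce a hidden linear-in-$N$ term, and that the per-problem accuracies can still be chosen so that the overall fidelity stays above $1-\epsilon$ with the stated $\epsilon < 0.11$ range. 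Once these scaling bookkeeping issues are settled, the contrapositive — $N=o(\exp(d))$ implies the penalty term is $o(d)$, contradicting $\epsilon$-accuracy — is immediate.
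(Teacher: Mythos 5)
Your proposal follows essentially the same route as the paper's proof: return to the intermediate bound \eqref{eq:P1P3} with the $G$-dependence kept inside $h_2$ (rather than the loosened $h_2(\tfrac{2g}{d}\ln\tfrac{1}{1-\epsilon L})$ form used for Proposition~\ref{prop:LB}), set $I_c(\mc{N}^{\otimes g})=0$, and use $G\, h_2\bigl(\tfrac{2}{G}\ln\tfrac{1}{1-\epsilon L}\bigr) = O(\ln G)$ with $G\le \tfrac{N}{g}+1$ to force $\ln N = \Omega(d)$. The one point you leave open --- whether the coefficient of $\ln N$ could carry a $\mathrm{poly}(d,g)$ factor, which would weaken the conclusion --- resolves immediately because condition (iii) gives $2\ln\tfrac{1}{1-\epsilon L}\le \tfrac{1}{4}$, so that coefficient is an absolute constant and no further sharpening via $g=o(d)$ is needed.
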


Proof of this proposition is presented in Appendix~\ref{sec:thresholdProof}. This result also extends to noise that are correlated at a gate but are independent across gates.

For all practical purposes, Proposition~\ref{prop:threshold} is an impossibility result since it proves that in the parameter region where  $I_c(\mc{N}^{\otimes g})=0$, accurate computation requires at least exponential space overhead.  An even stronger impossibility result is discussed in Appendix~\ref{sec:sandwichedThreshold}.

%The threshold in Proposition~\ref{prop:threshold} is tighter than that obtained in \cite{FawziMS2022}. Hence, by claims in \cite{FawziMS2022}, it is also tighter than the thresholds obtained in \cite{Razborov2004} and \cite{KempeRUW2008}.

In summary, Proposition~\ref{prop:threshold}  gives an upper bound on the noise threshold and Proposition~\ref{prop:LB} gives a lower on bound the minimum space overhead when the noise strength is below  that threshold. These bounds are applicable to a broad class of noise and gate models, and for circuits with sub-exponential depth, these are tighter than the respective existing bounds.

\subsection{Discussions}
The gate size dependent bounds on noise threshold and minimum space overhead in Proposition.~\ref{prop:threshold} and \ref{prop:LB}, respectively, can be useful in choosing the right experimental implementations. Consider the scenario where there are multiple options for experimental implementations using different approaches. These possible implementations have different gate sizes and encounter different types of noise. In this scenario, the bound in Proposition.~\ref{prop:LB} can be a thumb rule for choosing the best experimental implementation in terms of space overhead requirement. As the bounds in Propositions~\ref{prop:LB} and \ref{prop:threshold} extend to any qudit circuit, this can also be used as a thumb rule for comparison across different qudit and qubit technologies.

When the noise at the outputs of a gate are correlated, the minimum required space overhead and the noise threshold are decided by $I_c(\mc{N}^{(g)})$, where $\mc{N}^{(g)}$ is the generic non-i.i.d. noise at the outputs of a gate. In this context, it is important to understand the kind of correlations that hurt the computations most and plan to avoid such correlations in the experimental realizations.

Unlike \cite{Razborov2004,KempeRUW2008}, the bound in Proposition.~\ref{prop:LB} and that in \cite{FawziMS2022} require the knowledge of coherent information of a $k$-fold channel, $I_c(\mc{N}^{\otimes k})$. This often does not have a closed form and obtaining $I_c(\mc{N}^{\otimes k})$ requires to solve a $2^k$-dimensional non-convex optimization problem. With the increasing interest in understanding non-convex optimization in the machine learning community, a search for provably efficient algorithm for computing $I_c(\mc{N}^{\otimes k})$ can be of independent interest. %We would like to explore avenues to amalgamate the techniques from \cite{FawziMS2022, Razborov2004, KempeRUW2008}  with our approach and obtain a tighter bound, if possible, and design a fault-tolerant scheme that matches it. 

%Some of the techniques developed in \cite{FawziMS2022, Razborov2004, KempeRUW2008} may be of . For example, some of the intermediate bounds developed in \cite{FawziMS2022} are useful in private communication \cite{HircheRF2022}.

%For circuits with exponential depth, i.e., $D=\exp(\Omega(d))$, in the case of qubit computation, our bound can be combined with the depth-based bound ($\Omega(\log D)$) in \cite{FawziMS2022}. 

%next discuss the implications....say that it is tighter when $D=poly(d)$, in fact, any subexp like $D=O(\exp(d^{1-\delta}))$ for any $\delta>0$..... can be computed ......gate size dependence...allows to compare between different implementations........(g1, noise1) vs (g2,noise2).....one can even study the impossibility of certain technology.........if there noise power is a particular function of g for that techology...then if the bound is infinite for all g, then that technology will never give fault-tolerant computing....noise threshold.....

\section{Proof of Proposition~\ref{prop:LB}}
\label{sec:proofLB}
For proving the lower bound in Proposition~\ref{prop:LB}, we first state a converse for  computing a class of functions $f$ on $d$-qubit states of dimension.

\begin{theorem}
\label{thm:lipschitzLB}
Suppose the function $f$ to be computed using the noisy quantum circuit satisfies the following conditions.

(i) $f$ has an inverse $f^{-1}$ that can be accurately computed if we have access to a noiseless quantum circuit, (ii) $f^{-1}$ exists and %is $L$-lipschitz with respect to the distance induced by the fidelity metric, i.e., 
for any two $d$-qubit states $\eta_1^{(d)}$ and $\eta_2^{(d)}$, $1-F(f^{-1}(\eta_1^{(d)}), f^{-1}(\eta_2^{(d)})) \le L (1 - F(\eta_1^{(d)}, \eta_2^{(d)}))$ for some $L > 0$ and (iii) $L \le \frac{1-e^{-\frac{1}{8}}}{\epsilon}$.

Then, for $\epsilon$-accurate computation of $f$, for $d\ge 2g$, the required number of physical qubits is lower bounded by 
\begin{align}
\frac{d - \frac{1}{1-4\ln\frac{1}{1-\epsilon L}} I_c(\mc{N}^{\otimes g})}{\frac{I_c(\mc{N}^{\otimes g})}{g} + \frac{G}{g(G-1)} \frac{h_2(\frac{2g}{d} \ln\frac{1}{1-\epsilon L})}{1-4 \ln\frac{1}{1-\epsilon L}}}, \nonumber
\end{align}
\end{theorem}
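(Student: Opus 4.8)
The strategy is to reduce fault-tolerant computation of $f$ to a family of finite-blocklength quantum communication problems and then invoke a converse (sphere-packing / coherent-information) bound for each. Since $f$ is invertible with a Lipschitz-type stability property (conditions (i)–(iii)), a circuit that $\epsilon$-accurately computes $f$ can be post-composed with a noiseless circuit for $f^{-1}$ to yield an approximate identity channel on $d$ logical qubits, with fidelity degraded by at most the factor $L$ (hence $1-F \le \epsilon L$). So it suffices to lower bound the number of physical qubits needed to transmit/preserve a $d$-qubit state through the layered noisy architecture. I would first make this reduction precise and fix a ``hard'' input ensemble — the maximally entangled state with a reference system — so that fidelity of the channel translates into entanglement fidelity, which is what coherent information controls.

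Next I would isolate the information bottleneck created by the first layer. At layer~1 there are $\lceil N/g\rceil$ gates, each applying $\mc{N}^{\otimes g}$ to its $g$ outputs; the rest of the circuit (layers $2,\dots,D$ plus $\mc{D}$) is an arbitrary noisy post-processing. Because the $d$-qubit logical information must be recoverable with entanglement fidelity $\ge 1-\epsilon L$ after passing through $N/g$ independent copies of $\mc{N}^{\otimes g}$, a data-processing / continuity argument should give that the mutual information (or coherent information) surviving layer~1 is at least roughly $d - o(d)$. The quantitative form I would aim for: by the Alicki–Fannes–Winter continuity of coherent information, $I_c$ of the effective channel is within $O\!\big(d\, h_2(\text{something}) \big)$ of $d$, where the perturbation is governed by $\delta := \ln\frac{1}{1-\epsilon L}$ and the $\frac{2g}{d}$ factor comes from how the $d$ logical qubits are distributed among the $\lceil N/g\rceil$ first-layer gates. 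Then additivity of coherent information over the tensor product of independent channels gives $I_c(\text{layer 1}) \le \lceil N/g\rceil \, I_c(\mc{N}^{\otimes g})$ — here the single-letter $I_c(\mc{N}^{\otimes g})$ (not $Q(\mc{N})$) appears precisely because we only get to use each $g$-block once in the first layer, which is the source of the improvement over \cite{FawziMS2022}. Combining the lower bound $\gtrsim d - c_1\delta\, I_c(\mc{N}^{\otimes g})$ with the upper bound $\lceil N/g\rceil I_c(\mc{N}^{\otimes g}) + (\text{entropy correction})$ and solving for $N$ yields the claimed fraction, with the $\frac{G}{G-1}$ factor ($G = \lceil N/g\rceil$ or a related gate count) being the usual artifact of turning a ceiling/counting bound into a clean ratio.

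Concretely, the steps in order: (1) compose with noiseless $f^{-1}$ and reduce to an approximate identity channel with $1-F\le \epsilon L$; (2) choose the maximally-entangled test input, translating to entanglement fidelity and invoking the fidelity–to–coherent-information machinery (e.g. the decoupling/converse bounds from \cite{KhatriW2020book}); (3) split the circuit at layer~1, bounding the information that can cross it using additivity to get the $\lceil N/g\rceil I_c(\mc{N}^{\otimes g})$ term; (4) apply Alicki–Fannes–Winter (or an explicit Fannes–Audenaert-type bound) to control the gap between the required $d$ bits/qubits and what survives, producing the $h_2\!\big(\frac{2g}{d}\,\delta\big)$ and $\frac{1}{1-4\delta}$ terms — this is where condition (iii), $L\le \frac{1-e^{-1/8}}{\epsilon}$, is needed so that $\delta = \ln\frac{1}{1-\epsilon L} \le \frac18$ and $1-4\delta>0$; (5) rearrange the resulting inequality to solve for $N$. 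The main obstacle I anticipate is step (3)–(4): carefully accounting for the fact that the $d$ logical qubits and the $N-d$ ancillas are arbitrarily interleaved across the first-layer gates, and that the decoder sees the output of a deep noisy circuit rather than $\mc{N}^{\otimes N}$ directly — making the data-processing argument rigorous (so that only the first layer's noise is ``charged'' while everything downstream is handled by monotonicity) and getting the continuity constants sharp enough to match the stated expression is the delicate part. I would expect these technical pieces (the Lipschitz/fidelity bookkeeping and the continuity estimate) to be the content of the lemmas deferred to Appendix~\ref{sec:lipschitzLB} and \ref{sec:lipschitzLemma}.
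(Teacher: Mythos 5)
Your reduction via $f^{-1}$ to an approximate identity channel with $1-F\le \epsilon L$ matches the paper's first step, but your step (3) contains a genuine error that the rest of the plan leans on: you bound the information crossing layer~1 by $\lceil N/g\rceil\, I_c(\mc{N}^{\otimes g})$ by invoking ``additivity of coherent information over the tensor product of independent channels.'' Coherent information of channels is not additive in this direction; it is superadditive in general ($I_c(\mc{M}_1\otimes\mc{M}_2)\ge I_c(\mc{M}_1)+I_c(\mc{M}_2)$, with strict inequality possible, e.g.\ for depolarizing noise), which is exactly why quantum capacity needs regularization. If your claimed inequality held, it would imply a single-letter formula $Q(\mc{N})=I_c(\mc{N})$, which is false --- and it would also erase the very distinction between this paper's bound $g/I_c(\mc{N}^{\otimes g})$ and the regularized bound of Fawzi et al.\ that the paper is trying to beat. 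The paper circumvents this by never bounding the coherent information of the whole first layer: since $\epsilon$-accuracy must hold for \emph{all} inputs, it specializes to inputs that are product across the first-layer gate partition, $\rho^{(d)}=\otimes_{i=1}^G\rho_{d_i}$, argues the decoder may then be taken in product form $\otimes_i\mc{H}_i$, and uses multiplicativity of fidelity on tensor products to split the task into $G$ \emph{independent one-shot} quantum communication problems over $\mc{N}^{\otimes g}$ whose accuracies $\epsilon_i$ obey the joint constraint $\prod_i(1-\epsilon_i/2)\ge 1-\epsilon L$ (the factor $1/2$ accounting for free classical side-processing via Barnum--Knill--Nielsen). Each subproblem is then hit with the one-shot converse $d_i\le \frac{1}{1-2\epsilon_i}\left(I_c(\mc{N}^{\otimes g})+h_2(\epsilon_i)\right)$ from Khatri--Wilde.

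A second, smaller gap: you propose Alicki--Fannes--Winter continuity to generate the entropy correction, but the specific shape of the stated bound --- the $\frac{1}{1-4\ln\frac{1}{1-\epsilon L}}I_c(\mc{N}^{\otimes g})$ term and the $\frac{G}{G-1}h_2\!\left(\frac{2g}{d}\ln\frac{1}{1-\epsilon L}\right)$ term --- does not come from a continuity estimate. It comes from optimizing the allocation $\{\epsilon_i\}$ subject to the relaxed constraint $\sum_i\epsilon_i\le 2\ln\frac{1}{1-\epsilon L}$: the $I_c$ part is a convex maximization, maximized at an extreme point (all the error budget on one gate), while the $\sum_i h_2(\epsilon_i)$ part is a concave maximization, maximized at the symmetric point $\epsilon_i=\frac{2}{G}\ln\frac{1}{1-\epsilon L}$; condition (iii) enters to keep $1-4\ln\frac{1}{1-\epsilon L}>0$ and to stay in the monotone range of $h_2$, and $N\ge (G-1)g$, $G\ge d/g$ convert the result into the stated lower bound on $N$. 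Without this error-allocation optimization (and the product-input decomposition that makes it meaningful), your plan cannot reproduce the theorem's constants, and with the false additivity step it does not yield a correct bound at all.
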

{ where $G = \lceil \frac{N}{g} \rceil $, and $h_2(\cdot)$ is the binary entropy function.}
Proof of this theorem is presented in Appendix~\ref{sec:lipschitzLB}. Proof of Proposition~\ref{prop:LB} follows from this theorem.

%\begin{align}
%& ~ \frac{d - \frac{1}{1-2\ln\frac{1}{1-\epsilon L}} I_c(\mc{N}^{\otimes g})}{\frac{I_c(\mc{N}^{\otimes g})}{g} + \frac{G}{g(G-1)} \frac{h_2(\frac{g}{d} \ln\frac{1}{1-\epsilon L})}{1-2\frac{g}{d} \ln\frac{1}{1-\epsilon L}}} \nonumber \\
%& = \frac{d}{\frac{I_c(\mc{N}^{\otimes g})}{g} + \frac{G}{g(G-1)} \frac{h_2(\frac{g}{d} \ln\frac{1}{1-\epsilon L})}{1-2\frac{g}{d} \ln\frac{1}{1-\epsilon L}}} -  \frac{\frac{1}{1-2\ln\frac{1}{1-\epsilon L}} I_c(\mc{N}^{\otimes g})}{\frac{I_c(\mc{N}^{\otimes g})}{g} + \frac{G}{g(G-1)} \frac{h_2(\frac{g}{d} \ln\frac{1}{1-\epsilon L})}{1-2\frac{g}{d} \ln\frac{1}{1-\epsilon L}}}
%\end{align}

\begin{proof}[Proof of Proposition~\ref{prop:LB}]
For proving Proposition~\ref{prop:LB}, we first consider the second term in the bound in Theorem~\ref{thm:lipschitzLB}:

\begin{align}
\frac{- \frac{1}{1-4\ln\frac{1}{1-\epsilon L}} I_c(\mc{N}^{\otimes g})}{\frac{I_c(\mc{N}^{\otimes g})}{g} + \frac{G}{g(G-1)} \frac{h_2(\frac{2g}{d} \ln\frac{1}{1-\epsilon L})}{1-4 \ln\frac{1}{1-\epsilon L}}}. \nonumber
\end{align}

Note that as $\frac{G}{g(G-1)} \frac{h_2(\frac{2g}{d} \ln\frac{1}{1-\epsilon L})}{1-\frac{4g}{d} \ln\frac{1}{1-\epsilon L}}>0$ { if $L \le \frac{1-e^{-{\frac{1}{8}}}}{\epsilon}$},
\begin{align}
 - \frac{1}{\frac{I_c(\mc{N}^{\otimes g})}{g} + \frac{G}{g(G-1)} \frac{h_2(\frac{2g}{d} \ln\frac{1}{1-\epsilon L})}{1-{4} \ln\frac{1}{1-\epsilon L}}}
\ge - \frac{g}{I_c(\mc{N}^{\otimes g})}.
\end{align}

On the other hand, $\frac{1}{1-4\ln\frac{1}{1-\epsilon L}} \le 2$ if $L \le \frac{1-e^{-{\frac{1}{8}}}}{\epsilon}$

Hence, the second term in the bound in Theorem~\ref{thm:lipschitzLB} is $\ge - 2~g$ if $L \le \frac{1-e^{-{\frac{1}{8}}}}{\epsilon}$.

%%%%%%%%%%%%%%%%%%
Next, we consider the first term in the bound in Theorem~\ref{thm:lipschitzLB}:

\begin{align}
&~\frac{d}{\frac{I_c(\mc{N}^{\otimes g})}{g} + \frac{G}{g(G-1)} \frac{h_2(\frac{2g}{d} \ln\frac{1}{1-\epsilon L})}{1-4 \ln\frac{1}{1-\epsilon L}}} \nonumber,
\end{align}
and obtain a lower bound for this term.
Note that an upper bound on $h_2(\frac{2g}{d} \ln\frac{1}{1-\epsilon L})$ will give a lower bound on this term.

Next, we use the fact that %{\color{teal} [AN] Shouldn't this be $h_2(x) \ln 2$ instead of $\frac{h_2(x)}{\ln 2}$?}
% \begin{align}  \frac{h_2(x)}{{\ln 2}} & = x \ln\frac{1}{x} + (1-x) \ln\frac{1}{1-x} \nonumber \\
% & \le x \ln\frac{1}{x} + \ln\frac{1}{1-x} \nonumber \\
% & = x \ln\frac{1}{x} - \ln(1-x) \nonumber \\
% & = x \ln\frac{1}{x} - (-x-\frac{x^2}{2}-\frac{x^3}{3}- \cdots) \nonumber \\
% & \le  x \ln\frac{1}{x} + x (1+x+x^2+\cdots) \nonumber \\
% & = x \left(\ln\frac{1}{x} + \frac{1}{1-x}\right), \nonumber
% \end{align} 
\begin{align}  h_2(x) { \ln 2} & = x \ln\frac{1}{x} + (1-x) \ln\frac{1}{1-x} \nonumber \\
& \le x \ln\frac{1}{x} + \ln\frac{1}{1-x} \nonumber \\
& = x \ln\frac{1}{x} - \ln(1-x) \nonumber \\
& = x \ln\frac{1}{x} - (-x-\frac{x^2}{2}-\frac{x^3}{3}- \cdots) \nonumber \\
& \le  x \ln\frac{1}{x} + x (1+x+x^2+\cdots) \nonumber \\
& = x \left(\ln\frac{1}{x} + \frac{1}{1-x}\right), \nonumber
\end{align} 
 and under condition (iii) in Theorem~\ref{thm:lipschitzLB}, $h_2(\frac{2g}{d} \ln\frac{1}{1-\epsilon L}) \le h_2(\frac{g}{4d})$.

Thus, we obtain 
\begin{align} 
h_2(\frac{2g}{d} \ln\frac{1}{1-\epsilon L}) \le  \frac{g }{4d~{ \ln 2}} \left(\ln\frac{4d}{g}+\frac{1}{1-\frac{g}{4d}}\right) \le \frac{g}{4d~{ \ln 2}} \left(\ln\frac{4d}{g}+\frac{1}{1-\frac{1}{8}}\right) \label{eq:h2bound}
\end{align}
as $d \ge 2g$.
Under condition (iii) in Theorem~\ref{thm:lipschitzLB}, $\frac{1}{1-4 \ln\frac{1}{1-\epsilon L}}\le 2$ and hence,

\begin{align}
&~\frac{d}{\frac{I_c(\mc{N}^{\otimes g})}{g} + \frac{G}{g(G-1)} \frac{h_2(\frac{2g}{d} \ln\frac{1}{1-\epsilon L})}{1-4 \ln\frac{1}{1-\epsilon L}}} \nonumber \\
& \ge \frac{d}{\frac{I_c(\mc{N}^{\otimes g})}{g} + \frac{2G}{g(G-1)} h_2(\frac{2g}{d} \ln\frac{1}{1-\epsilon L})} \nonumber \\
& \ge \frac{d}{\frac{I_c(\mc{N}^{\otimes g})}{g}+\frac{2G}{g(G-1)}\frac{g}{4d~{\ln 2}} \left(\ln\frac{4d}{g}+\frac{8}{7}\right)} \nonumber \\
& \ge \frac{d}{\frac{I_c(\mc{N}^{\otimes g})}{g}+\frac{1}{2(d-g)~{\ln 2}}\left(\ln\frac{4d}{g}+\frac{8}{7}\right)}.\nonumber
\end{align}

The last inequality follows from the fact that $\frac{G}{G-1}$ is monotonically decreasing in $G$ and $G\ge\frac{d}{g}$.
% which along with the fact that $\frac{G}{(G-1)\left(1-\frac{4g}{d} \ln\frac{1}{1-\epsilon L}\right)}>1$, gives us
%\begin{align}
%&~1 - \frac{g}{I_c(\mc{N}^{\otimes g})} \frac{G}{g(G-1)}  \frac{h_2(\frac{2g}{d} \ln\frac{1}{1-\epsilon L})}{1-\frac{4g}{d} \ln\frac{1}{1-\epsilon L}} \nonumber \\
%& \ge 1 -  \frac{1}{I_c(\mc{N}^{\otimes g})} h_2(\frac{2g}{d} \ln\frac{1}{1-\epsilon L}) \nonumber \\
%& \ge 1 -  \frac{1}{I_c(\mc{N}^{\otimes g})} \frac{2g}{d} \ln\frac{1}{1-\epsilon L} \ln\left(\frac{2g}{d} \ln\frac{1}{1-\epsilon L}\right) \nonumber \\
%& = 1 - \frac{2g}{d\cdot I_c(\mc{N}^{\otimes g})} \ln\frac{1}{1-\epsilon L} \left( \ln\left(2g\ln\frac{1}{1-\epsilon L}\right) - \ln d\right)
%\end{align}
%
%Thus, the first term in the bound in Theorem~\ref{thm:lipschitzLB} is lower bounded by
%\begin{align}
%\frac{d\cdot g}{I_c(\mc{N}^{\otimes g})}  + 2 \left(\frac{g}{I_c(\mc{N}^{\otimes g})}\right)^2 \ln\frac{1}{1-\epsilon L} \left( \ln d -  \ln\left(2g\ln\frac{1}{1-\epsilon L}\right)\right). \nonumber
%\end{align}
%
%After expanding the expression and choosing $L \ge 1$ for the term involving $\ln d$ and choosing $L \le  \frac{1-e^{-\frac{1}{8}}}{\epsilon}$ for the term involving  $\ln\left(2g\ln\frac{1}{1-\epsilon L}\right)$,  the linear and the logarithmic terms in the proposition follow.
%
%
%
%Thus, the bound in Proposition~\ref{prop:LB} is true for any $f$  as long as $f^{-1}$ is $L$-lipschitz with $1 \le L \le \frac{1-e^{-{\frac{1}{8}}}}{\epsilon}$. 

If $f$ is unitary, then $f^{-1}$ satisfies all the three conditions in Theorem~\ref{thm:lipschitzLB} for $L=1$. Hence, the condition %, $f^{-1}$ is $L$-lipschitz with 
$0 < L \le \frac{1-e^{-{\frac{1}{8}}}}{\epsilon}$, is satisfied by a large class of $f$, including unitary transformations, if $0<\epsilon < 1-e^{-\frac{1}{8}} \le 0.11$. Thus, the derived bound is applicable to all practically important $f$,  including the well known quantum Fourier transform, which is the fundamental building block of many interesting algorithms.

This completes the proof of Proposition~\ref{prop:LB}.
\end{proof}

%discuss why inversion is not a restriction .......... and as the space of states is compact lipschitzness is true for any continuous $f$........

%say that the above bound does not apply to the computations whose output is classical bits....because they are not invertible.....note that the same is true for bounds in Fawzi et al. ........the case with classical output is considered separately in .....

\section{Conclusion and Future Work}
\label{sec:conclusion}

In this paper, inspired by the information theoretic bounds for noisy classical circuits \cite{Pippenger1988, EvansS1999}, a connection between fault-tolerant quantum computation and finite blocklength communication is cultivated. This  leads to a lower bound on the required space overhead for fault-tolerant computation and is given by gate size $g$ divided by the coherent information of a $g$-fold noisy channel. This bound is tighter than the existing bounds and can be extended to the case where the noise on the outputs of a gate are correlated. It gives a tighter upper bound on the noise threshold. 

In future, we would like to combine our approach with techniques developed in \cite{Razborov2004,KempeRUW2008,FawziMS2022} for tightening the bound and design fault-tolerant schemes for achieving that bound. 

As discussed in Sec.~\ref{sec:result}, obtaining $I_c(\mc{N}^{\otimes k})$ involves a $2^k$-dimensional non-convex optimization problem. Since understanding non-convex optimization is of mathematical interest due to its usefulness in modern machine learning, an exploration for efficient computation of $I_c(\mc{N}^{\otimes k})$ can be of independent mathematical interest.

\section*{Acknowledgment}
{ The authors are thankful to the two anonymous reviewers and the editor,  Anurag Anshu, for their many helpful comments.} In particular, they thank Reviewer 1 for pointing out an error in the lower order term of the main bound. 

AC thanks Lav R. Varshney for nudging him multiple times to work on quantum circuits. AC also thanks Lav R. Varshney and Arul Lakshminarayan for useful feedback on the paper. AC gratefully acknowledges supports through the grants  NFIG IIT Madras, DST/INSPIRE/04/2016/001171 and SERB/SRG/2019/001809.

AN expresses gratitude for the support provided in part by National Science Foundation grant PHY-2112890.

\appendix

\section{Proof of Theorem~\ref{thm:lipschitzLB}}
\label{sec:lipschitzLB}
For proving \ref{thm:lipschitzLB}, we first state an important lemma.

\begin{lemma}
\label{lem:lipschitzLemma}
Suppose the conditions (i)--(iii) in Theorem~\ref{thm:lipschitzLB} hold. Then, for $\epsilon$-accurate computation we need
\(d \le \sum_{i=1}^G \frac{1}{1-2~\epsilon_i}\left(I_c(\mc{N}^{\otimes g}) + h_2(\epsilon_i)\right),\)
where, { $\{\epsilon_i\}$ are tunable auxiliary variables taking values in $[0,1]$ and satisfying the condition:} $\prod_{i=1}^G (1-\frac{\epsilon_i}{2}) \ge 1 - \epsilon~L$.
\end{lemma}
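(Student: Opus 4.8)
\textbf{Proof proposal for Lemma~\ref{lem:lipschitzLemma}.}

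The plan is to build a bridge from the fault-tolerance criterion C0 to a collection of finite-blocklength quantum communication problems, one per gate-column of the circuit, and then invoke the standard converse bound on coherent information for each such problem. Concretely, I would first use condition (ii) — the Lipschitz-type bound on $f^{-1}$ — together with C0 to argue that the noisy circuit, post-composed with a noiseless implementation of $f^{-1}$, approximately preserves an arbitrary $d$-qubit input: if $F(QC(\rho^{(d)}), f(\rho^{(d)})) > 1-\epsilon$ then $F(f^{-1}(QC(\rho^{(d)})), \rho^{(d)}) > 1 - \epsilon L$. So the composite channel acts as an $(\epsilon L)$-good ``identity/storage'' channel on the logical register. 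The point of introducing $f^{-1}$ is precisely to reduce computing a general $f$ to the information-transmission task of faithfully carrying $d$ qubits through the noisy layered architecture.

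Next I would slice the circuit layer by layer (or, more precisely, gate-column by gate-column). Each layer consists of $G=\lceil N/g\rceil$ gates, each a perfect gate followed by $\mathcal{N}^{\otimes g}$; the perfect parts, the ancilla preparations, and the free classical processing are all noiseless and can be absorbed into encoder/decoder maps. So the flow of quantum information through one layer is bottlenecked by $G$ independent copies of $\mathcal{N}^{\otimes g}$, i.e. by $\mathcal{N}^{\otimes N}$ total. The key quantitative step is a one-shot/finite-blocklength converse: if a channel can be used to transmit (or store) a $d$-qubit maximally entangled state with fidelity parameter governed by some $\epsilon_i$ on the $i$-th block, then $d$ must be bounded by something like $\sum_i \frac{1}{1-2\epsilon_i}\bigl(I_c(\mathcal{N}^{\otimes g}) + h_2(\epsilon_i)\bigr)$. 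Each summand has the shape of a one-shot coherent-information converse (coherent information of the noise on one gate, plus a binary-entropy ``slack'' term $h_2(\epsilon_i)$ from the finite error, divided by a $(1-2\epsilon_i)$ factor from a continuity/amplification argument — this is the kind of bound one reads off from the finite-blocklength quantum communication literature cited as \cite{KhatriW2020book}). The auxiliary variables $\{\epsilon_i\}$ are exactly a budgeting of the total allowed infidelity $\epsilon L$ across the $G$ gates, and the multiplicative constraint $\prod_i(1-\epsilon_i/2) \ge 1-\epsilon L$ is the natural way these per-gate fidelity losses compose (fidelity being multiplicative/submultiplicative under composition, with the factor $2$ coming from converting between fidelity and trace distance). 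Condition (iii), $L \le (1-e^{-1/8})/\epsilon$, guarantees $1-\epsilon L$ is bounded away from $0$ so that a feasible choice of $\{\epsilon_i\}$ exists and the $\frac{1}{1-2\epsilon_i}$ factors stay controlled.

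I expect the main obstacle to be establishing the per-block converse rigorously — specifically, isolating a single gate's noise $\mathcal{N}^{\otimes g}$ as the information bottleneck while correctly accounting for (a) the entanglement with the rest of the circuit and the reference system, (b) the fact that information can be routed around any particular gate through the other $G-1$ gates in the same layer and through many layers, and (c) converting the global fidelity guarantee into the per-block parameters $\epsilon_i$ in a way that makes the product constraint come out cleanly. The honest version of this argument needs a data-processing / decoupling-style inequality applied at the level of a cut through the circuit, plus a careful invocation of a finite-blocklength coherent-information converse (e.g. via the decoupling theorem or a min-entropy bound) to get the $I_c(\mathcal{N}^{\otimes g}) + h_2(\epsilon_i)$ numerator and the $(1-2\epsilon_i)^{-1}$ prefactor. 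The layered structure and the fact that every layer must pass essentially all $d$ logical qubits is what lets one sum over the $G$ gates of a single (worst) layer rather than over all $GD$ gates; getting that counting exactly right — and justifying that depth does not help beyond a constant — is the delicate part. Once the single-layer bottleneck bound is in hand, optimizing over the $\{\epsilon_i\}$ subject to $\prod_i(1-\epsilon_i/2)\ge 1-\epsilon L$ and plugging into Theorem~\ref{thm:lipschitzLB}'s statement is routine.
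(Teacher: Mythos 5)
Your high-level plan (reduce computation of $f$ to approximate preservation of the $d$-qubit input via $f^{-1}$, bottleneck through a single layer of $G$ noisy gates, apply a one-shot coherent-information converse per gate, and budget the total infidelity $\epsilon L$ across gates) matches the paper's strategy, but the step you yourself flag as the main obstacle is exactly the one you leave unresolved, and the paper closes it by a much simpler device than the decoupling-at-a-cut argument you anticipate. The paper uses only the \emph{first} layer: it writes $QC = \mc{O}_{2:D}\circ \mc{N}^{\otimes N}\circ\bar{\mc{O}}_1$ and absorbs everything after the first layer's noise, together with $f^{-1}$, into a single noiseless map $\mc{H}$. Then, since a converse only needs a necessary condition, it restricts attention to inputs of the product form $\rho^{(d)}=\otimes_{i=1}^G\rho_{d_i}$ aligned with the partition of the $d$ logical qubits among the first-layer gates (gate $i$ receives $d_i$ of them, $\sum_i d_i=d$). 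Because the first-layer gates then act on disjoint, unentangled blocks and the noise is independent across gates, one may take $\mc{H}=\otimes_i\mc{H}_i$ without loss, and multiplicativity of fidelity over tensor products, $F(\otimes_i\sigma_i,\otimes_i\sigma_i')=\prod_iF(\sigma_i,\sigma_i')$, converts the global guarantee into per-gate guarantees $1-\epsilon_i$ with $\prod_i(1-\epsilon_i)\ge 1-\epsilon L$. This eliminates your concerns (a) and (b) entirely: there is no routing around a first-layer gate, no worst-layer selection, and no claim about depth is needed, so no decoupling or min-entropy machinery is required. Each per-gate condition is then literally the one-shot accuracy criterion for quantum communication over $\mc{N}^{\otimes g}$, and the cited finite-blocklength converse gives $d_i\le\frac{1}{1-2\epsilon_i}\left(I_c(\mc{N}^{\otimes g})+h_2(\epsilon_i)\right)$; summing over $i$ gives the bound.

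A second concrete error: you attribute the factor $2$ in the constraint $\prod_i(1-\epsilon_i/2)\ge 1-\epsilon L$ to a fidelity/trace-distance conversion. In the paper the tensor-product argument yields $\prod_i(1-\epsilon_i)\ge 1-\epsilon L$, and the relaxation to $\epsilon_i/2$ comes from accounting for the \emph{free noiseless classical computation} allowed in the model: by Barnum--Knill--Nielsen, forward classical communication can improve a fidelity $1-\eta$ to at most $1-\eta/2$ at the same quantum rate. Without this step your per-gate converse (stated for unassisted quantum communication) would not cover the model, since the circuit is allowed unlimited noiseless classical side-processing between layers. So while your skeleton is right, the proposal is missing the two ideas that actually make the proof go through: the first-layer/product-input reduction with a product decoder, and the classical-assistance argument that produces the $\epsilon_i/2$ constraint.
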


\begin{proof}[Proof of Theorem~\ref{thm:lipschitzLB}] Proof of this lemma is presented in Appendix~\ref{sec:lipschitzLemma}. Here, we prove Theorem~\ref{thm:lipschitzLB} using this lemma.

First, by using the fact that $1-x \le \exp(-x)$, we obtain the following upper bound.

\begin{align}
& \max_{\{\epsilon_i\}} \sum_{i=1}^G \frac{1}{1-2~\epsilon_i}\left(I_c(\mc{N}^{\otimes g}) + h_2(\epsilon_i)\right) \nonumber \\
& \mbox{s.t. } \sum_{i=1}^G \epsilon_i \le 2\ln\frac{1}{1-\epsilon L}. \label{eq:UB0}
\end{align}

Clearly the above upper bound is bounded by the sum of the maximum of the following optimization problems: P1 and P2.

\begin{align}
&\mbox{P1: } \max_{\{\epsilon_i\}} \sum_{i=1}^G \frac{1}{1-2~\epsilon_i} I_c(\mc{N}^{\otimes g})  \mbox{ s.t. } \sum_{i=1}^G \epsilon_i \le 2 \ln\frac{1}{1-\epsilon L}. \label{eq:P1}
\end{align}

\begin{align}
& \mbox{P2: } \max_{\{\epsilon_i\}} \sum_{i=1}^G  \frac{h_2(\epsilon_i)}{1-2~\epsilon_i} \mbox{ s.t. } \sum_{i=1}^G \epsilon_i \le 2 \ln\frac{1}{1-\epsilon L}. \label{eq:P2}
\end{align}

As P1 is a maximization of sum of convex functions subject to a linear constraint, P1 is maximized at an extreme point, as it is a convex maximization. Under condition (iii) in Theorem~\ref{thm:lipschitzLB}, $2 \ln\frac{1}{1-\epsilon L}<\frac{1}{2}$. Thus, the optimum in P1 is obtained when $\epsilon_1=2 \ln\frac{1}{1-\epsilon L}$ and $\epsilon_i=0$ for $i\ge 2$. 

Note that the optimum of P2 is upper-bounded by the optimum of the following problem.

\begin{align}
& \mbox{P3: } \max_{\{\epsilon_i\}} \frac{1}{1-4\ln\frac{1}{1-\epsilon L}} \sum_{i=1}^G  {h_2(\epsilon_i)} \mbox{ s.t. } \sum_{i=1}^G \epsilon_i \le 2 \ln\frac{1}{1-\epsilon L}. \label{eq:P3}
\end{align}

As P3 is a maximization of sum of concave functions subject to a linear constraint, by the symmetry of the problem, the optimum solution is $\epsilon^*_i=\frac{2}{G} \ln\frac{1}{1-\epsilon L}$. 

Summing the optimum of P1 and P3, the resulting upper bound on $d$ becomes

\begin{align}
 & d \le  (G-1) I_c(\mc{N}^{\otimes g}) + \frac{1}{1-4\ln\frac{1}{1-\epsilon L}} I_c(\mc{N}^{\otimes g}) + \frac{G}{1-{4}\ln\frac{1}{1-\epsilon L}} h_2(\frac{2}{G} \ln\frac{1}{1-\epsilon L}) \label{eq:P1P3} \\
 & ~ = (G-1) \left(I_c(\mc{N}^{\otimes g}) + \frac{G}{G-1} \frac{h_2(\frac{2}{G} \ln\frac{1}{1-\epsilon L})}{1-{4}\ln\frac{1}{1-\epsilon L}}\right) + \frac{1}{1-4\ln\frac{1}{1-\epsilon L}} I_c(\mc{N}^{\otimes g}). \nonumber
 %& ~ \le 
\end{align}

Note that as every physical qubit goes through a gate of size $g$ (including dummy identity gates as discussed in Sec.~\ref{sec:model}), $N \ge (G-1) g$ and hence,

\begin{align}
d \le N \left(\frac{I_c(\mc{N}^{\otimes g})}{g} + \frac{G}{g(G-1)} \frac{h_2(\frac{2}{G} \ln\frac{1}{1-\epsilon L})}{1-{4}\ln\frac{1}{1-\epsilon L}}\right) + \frac{1}{1-4\ln\frac{1}{1-\epsilon L}} I_c(\mc{N}^{\otimes g}) \nonumber
\end{align}

As all inputs must pass through a gate, $G \ge \frac{d}{g}$.  So, under condition (iii), $\frac{2}{G} \ln\frac{1}{1-\epsilon L}\le \frac{2g}{d} \ln\frac{1}{1-\epsilon L} \le 0.5$ {for any} $d \ge \frac{g}{2}$. As $h_2(x)$ is monotonic over $[0,0.5]$, 

\begin{align}
d \le N \left(\frac{I_c(\mc{N}^{\otimes g})}{g} + \frac{G}{g(G-1)} \frac{h_2(\frac{2g}{d} \ln\frac{1}{1-\epsilon L})}{1-{4}\ln\frac{1}{1-\epsilon L}}\right) + \frac{1}{1-4\ln\frac{1}{1-\epsilon L}} I_c(\mc{N}^{\otimes g}).
\end{align}

This implies

\begin{align}
N \ge \frac{d - \frac{1}{1-4\ln\frac{1}{1-\epsilon L}} I_c(\mc{N}^{\otimes g})}{\frac{I_c(\mc{N}^{\otimes g})}{g} + \frac{G}{g(G-1)} \frac{h_2(\frac{2g}{d} \ln\frac{1}{1-\epsilon L})}{1-{4}\ln\frac{1}{1-\epsilon L}}}.
\end{align}

This completes the proof of Theorem~\ref{thm:lipschitzLB}.
\end{proof}

\section{Proof of Lemma~\ref{lem:lipschitzLemma}}
\label{sec:lipschitzLemma}
For $f^{-1}$ satisfying conditions (i)-(iii) in Theorem~\ref{thm:lipschitzLB},  satisfaction of the $\epsilon$-accuracy condition C0 (Eq. \ref{eq:cri0}) implies satisfaction of the following condition.

\begin{align}
\mbox{C1: }\forall \rho^{(d)}~\mbox{(}d\mbox{-qubit state )}, F(f^{-1}\circ QC(\rho^{(d)}),\rho^{(d)}) > 1-\epsilon~L.
\label{eq:cri1}
\end{align}

Let us denote $\mc{D} \circ \mc{O}_D\circ \mc{O}_{D-1}\cdots \circ \mc{O}_2$ by $\mc{O}_{2:D}$, then
$QC(\rho^{(d)}) = \mc{O}_{2:D}\circ \mc{N}^{\otimes N} \circ \bar{\mc{O}}_1 (\rho^{(N)})$ and hence,
$f^{-1}\circ QC(\rho^{(d)})$ can be written as $\mc{H} \circ \mc{N}^{\otimes N} \circ \bar{\mc{O}}_1 (\rho^{(N)})$ for some $\mc{H}$. Here $\bar{\mc{O}}_1$ is the noiseless operation at layer $1$, i.e., $\mc{O}_1=\mc{N}^{\otimes N} \circ \bar{\mc{O}}_1$. Hence, satisfaction of condition C1 implies satisfaction of the following condition.

\begin{align}
 \exists \mc{H} \mbox{ s.t. } \forall \rho^{(d)}, F(\mc{H} \circ \mc{N}^{\otimes N} \circ \bar{\mc{O}}_1(\rho^{(N)}),\rho^{(d)}) > 1-\epsilon~L.
\label{eq:cri2}
\end{align}

The input to $\bar{\mc{O}}_1$ is $\rho^{(d)} \otimes \left(\otimes_{a=1}^{N-d}\sigma_a \right)$, which is a state of $N$ physical qubits, where $d$ of them are input  qubits (from $\rho^{(d)}$) and $N-d$ of them are ancilla qubits used for error correction. 

Let in an $\epsilon$-accurate implementation of $QC$, ${\mc{O}}_1$ is implemented using $G$ gates of size $g$ in parallel. In that implementation, let the $i$th gate on ${\mc{O}}_1$, for $1 \le  i\le G$, take $d_i$ qubits  from $\rho^{(d)}$ and rest from ancilla qubits as inputs. 

Without loss of generality, let us assume that in $\mc{O}_1$, the first $d_1$ qubits are input to gate $1$, next $d_2$ qubits are input to gate $2$ and so on. Let the noiseless computation by gate $i$ in layer $1$ be given by $\mc{G}_i(\cdot)$.
Then, for an input of the form $\rho^{(d)} = \otimes_{i=1}^G \rho_{d_i}$, where $\rho_{d_i}$ are $d_i$-qubit states, the output of $\mc{O}_1$ is

\(\mc{N}^{\otimes N} \circ \otimes_{i=1}^G \mc{G}_i\left(\rho_{d_i}\otimes \left(\otimes_{a=1}^{g-d_i}\sigma^{(i)}_{a} \right)\right),\)
where, $\sigma^{(i)}_{a}$ are the ancilla inputs to gate $i$.

So, for an $\epsilon$-accurate implementation, the following special case of Eq. \ref{eq:cri2} must be satisfied: $\exists \mc{H} \mbox{ s.t. for all } \rho^{(d)} = \otimes_{i=1}^G \rho_{d_i}$,
\begin{align}
  F\left(\mc{H} \circ \mc{N}^{\otimes N} \circ \otimes_{i=1}^G \mc{G}_i\left(\rho_{d_i}\otimes \left(\otimes_{a=1}^{g-d_i}\sigma^{(i)}_{a} \right)\right),\otimes_{i=1}^G \rho_{d_i}\right) > 1-\epsilon~L,
\label{eq:cri2a}
\end{align}

Since \(\mc{N}^{\otimes N} \circ \otimes_{i=1}^G \mc{G}_i(\rho_{d_i}\otimes \left(\otimes_{a=1}^{g-d_i}\sigma^{(i)}_{a} \right) = \otimes_{i=1}^G \mc{N}^{\otimes g} \circ \mc{G}_i(\rho_{d_i}\otimes \left(\otimes_{a=1}^{g-d_i}\sigma^{(i)}_{a} \right),\)
and $\rho^{(d)} = \otimes_{i=1}^G \rho_{d_i}$, if there exists an $\mc{H}$ satisfying the condition in 
Eq. \ref{eq:cri2a}, then there also exists an $\mc{H}=\otimes_{i=1}^G \mc{H}_i$ satisfying the same condition. This is because $\mc{G}_i\left(\rho_{d_i}\otimes \left(\otimes_{a=1}^{g-d_i}\sigma^{(i)}_{a} \right)\right)$ is independent of  $\mc{G}_j(\cdot)$ and $\rho_{d_j}$ for $j\neq i$ and noise are independent across gates.

Thus, a necessary condition for satisfaction of C0 is: $\exists \{\mc{H}_i\} \mbox{ s.t. for all } \rho^{(d)} = \otimes_{i=1}^G \rho_{d_i}$,
\begin{align}
 F\left(\otimes_{i=1}^G \mc{H}_i \circ \mc{N}^{\otimes g} \circ  \mc{G}_i(\rho_{d_i}\otimes \left(\otimes_{a=1}^{g-d_i}\sigma^{(i)}_{a} \right),\otimes_{i=1}^G \rho_{d_i}\right) > 1-\epsilon~L.
\label{eq:cri3}
\end{align}

By the fact that $F(\sigma_1 \otimes \sigma_2, \sigma'_1 \otimes \sigma'_2) = F(\sigma_1,\sigma'_1) \cdot  F(\sigma_2, \sigma'_2)$,  condition \eqref{eq:cri3} is equivalent to: for some $\{\epsilon_i \in (0,1): i\}$, for each $i$, 
\begin{align}
\mbox{C2: } \exists \mc{H}_i \mbox{ s.t. } \forall \rho_{d_i}, F(\mc{H}_i \circ \mc{N}^{\otimes g} \circ  \mc{G}_i(\rho_{d_i}\otimes \left(\otimes_{a=1}^{g-d_i}\sigma^{(i)}_{a} \right)), \rho_{d_i}) > 1-\epsilon_i,
\end{align}
where, $\prod_{i=1}^G (1-\epsilon_i) \ge 1 - \epsilon~L$.

For satisfaction of condition C2, a necessary condition is the following:  for some $\{\epsilon_i \in (0,1): i\}$, for each $i$, 
\begin{align}
\mbox{C3: } \exists \{\mc{H}_i, \bar{\mc{G}}_i\} \mbox{ s.t. } \forall \rho_{d_i}, F(\mc{H}_i \circ \mc{N}^{\otimes g} \circ  \bar{\mc{G}}_i(\rho_{d_i}\otimes \left(\otimes_{a=1}^{g-d_i}\sigma^{(i)}_{a} \right)), \rho_{d_i}) > 1-\epsilon_i,
\end{align}
where,  $\prod_{i=1}^G (1-\epsilon_i) \ge 1 - \epsilon~L$. Note that under condition (iii) in Theorem~\ref{thm:lipschitzLB}, $1-\epsilon L > \frac{1}{2}$ and hence, a necessary condition is $\epsilon_i < \frac{1}{2}$ for $1\le i \le G$.

Note that condition C3 is equivalent to the $\epsilon_i$-accuracy criteria for one-shot communication of quantum information over the quantum channel $\mc{N}^{\otimes g}$ \cite{KhatriW2020book}. Hence, by 
\cite[Sec.~9.1.2]{KhatriW2020book}, a necessary condition for satisfaction of C3 is: for $1 \le i \le G$
\begin{align} 
d_i \le \frac{1}{1-2~\epsilon_i}\left(I_c(\mc{N}^{\otimes g}) + h_2(\epsilon_i)\right),
\end{align}
where $h_2$ is the binary entropy function.

%Note that as forward classical communication does not increase the rate of transmission of quantum information subject to a given fidelity constraint \cite[Sec.~VIII]{BarnumKN2000}, the above bound also holds  in the presence of noiseless classical computation and noiseless classical buffers at $\mc{O}_1$ and in later layers.

From this, the following upper bound on $d$ for $\epsilon$-accurate computation is obtained: for some $\{\epsilon_i\}$,

\(d=\sum_{i=1}^G d_i \le \sum_{i=1}^G \frac{1}{1-2~\epsilon_i}\left(I_c(\mc{N}^{\otimes g}) + h_2(\epsilon_i)\right),\)
where, $\prod_{i=1}^G (1-\epsilon_i) \ge 1 - \epsilon~L$.

By allowing noiseless forward classical communication, for the same rate of quantum communication, fidelity of $1-\eta$ can at most improve to $1-\frac{\eta}{2}$ \cite[Sec.~VIII]{BarnumKN2000}. Using this fact, if follows that by allowing noiseless classical computation and classical buffers, the bound can at most be the following.

\(d=\sum_{i=1}^G d_i \le \sum_{i=1}^G \frac{1}{1-2~\epsilon_i}\left(I_c(\mc{N}^{\otimes g}) + h_2(\epsilon_i)\right),\)
where, $\prod_{i=1}^G (1-\frac{\epsilon_i}{2}) \ge 1 - \epsilon~L$. This completes the proof of Lemma~\ref{lem:lipschitzLemma}.

The same proof goes through after replacing $\mc{N}^{\otimes g}$ by a general $2^g$-dimensional channel $\mc{N}^{(g)}$. Hence, the bound applies if the noise at the outputs of a gate are not independent, but noise are independent across gates.

\section{Proof of Proposition~\ref{prop:threshold}}
\label{sec:thresholdProof}
Consider the bound \eqref{eq:P1P3} obtained in the proof of Theorem~\ref{thm:lipschitzLB}. When $I_c(\mc{N}^{\otimes g})=0$ and $\epsilon<0.11$ this bounds become

\begin{align}
d \le 2G \cdot h_2(\frac{2}{G} \ln\frac{1}{1-\epsilon L}). \nonumber
\end{align} 

Then, using the facts that $h_2(x)$ is monotonic over $[0,0.5]$ and ${{h_2(x) \ln 2}}\le x\left(\ln\frac{1}{x}+\frac{1}{1-x}\right)$, we obtain the following bound.

\begin{align}
d & \le 2G \cdot \frac{2}{G{\ln 2}} \ln\frac{1}{1-\epsilon L} \left(\ln\frac{G}{2\ln\frac{1}{1-\epsilon L}} +\frac{G}{G - 2\ln\frac{1}{1-\epsilon L}} \right) \nonumber \\
&  \le  \frac{4}{{\ln 2}}\ln\frac{1}{1-\epsilon L} (\ln G + \frac{4}{3}) - \frac{4}{{\ln 2}} \ln\frac{1}{1-\epsilon L} \ln\left(2\ln\frac{1}{1-\epsilon L}\right)  \nonumber \\ %- 2\ln\frac{1}{1-\epsilon L}
& \le \frac{1}{2{\ln 2}} (\ln G + \frac{4}{3}) + \frac{1}{2{\ln 2}} \ln 4 . \nonumber
\end{align} 
The first inequality follows because $2\ln\frac{1}{1-\epsilon L} \le \frac{1}{4}$ under conditions (i)-(iii) in Theorem~\ref{thm:lipschitzLB}, $G\ge 1$ and $\frac{G}{G-\frac{1}{4}}$ is decreasing with $G$. 

For the last inequality, { the second term is maximized when $L = \frac{1-e^{-1/8}}{\epsilon}$, which implies that $- 4 \ln\frac{1}{1-\epsilon L} \ln\left(2\ln\frac{1}{1-\epsilon L}\right) \leq \frac{1}{2} \ln 4$.}

%The last inequality follows using the fact that under conditions (i)-(iii) in Theorem~\ref{thm:lipschitzLB}, $0\le\ln\frac{1}{1-\epsilon L} \le \frac{1}{8}$ and the fact that  since $x \ln x$ is convex for $x\ge 0$, $\min_{x\ge 0} x \ln x = - \frac{1}{e}$. 

As $G = \lceil \frac{N}{g} \rceil \leq \frac{N}{g}+1$, the bound implies

\begin{align}
N \ge  \frac{g}{4} ~\exp(2 ~{\ln 2} ~d-\frac{4}{3})-1.
\end{align}

% \begin{align}
% N \ge  \frac{g}{4} ~\exp(2 ~{\ln 2} ~d-\frac{4}{3}).
% \end{align}

This completes the proof of Proposition~\ref{prop:threshold}.

\section{Towards a Stronger Impossibility Result}
\label{sec:sandwichedThreshold}
Using the same proof as that of Lemma~\ref{lem:lipschitzLemma} and the results in \cite[Sec.~9.1.2]{KhatriW2020book}, one can obtain another bound on $d$, in terms of $\alpha$-sandwiched Renyi coherent information $I_c(\mc{N}^{\otimes g};\alpha)$ of the channel. Here, $\alpha $ is a parameter more than $1$.

\(d=\sum_{i=1}^G d_i \le \sum_{i=1}^G \left(I_c(\mc{N}^{\otimes g};\alpha) + \frac{\alpha}{\alpha-1}{\log_2}\frac{1}{1-\epsilon_i}\right),\)
where, $\prod_{i=1}^G (1-\frac{\epsilon_i}{2}) \ge 1 - \epsilon~L$. 

%{It is used in later step and written again? This line is redundant}
%\sout{Now, using the fact that }
% \begin{align}
% -\ln(1-x) \le x + x^2 + \cdots = \frac{x}{1-x},
% \end{align}
% the bound becomes

Therefore, the bound becomes
\begin{align}
G I_c(\mc{N}^{\otimes g};\alpha) + \frac{\alpha}{\alpha-1} \max_{\{\epsilon_i\}}  \sum_{i=1}^G {\log_2}\frac{1}{1-\epsilon_i} \nonumber \\
\mbox{ s.t. } \prod_{i=1}^G (1-\frac{\epsilon_i}{2}) \ge 1 - \epsilon~L.
\end{align}

Since $1-x \le \exp(-x)$ and loosening the constraint increases the maximum, the maximum of the above optimization is upper-bounded by
\begin{align}
G I_c(\mc{N}^{\otimes g};\alpha) + \frac{\alpha}{{ \ln 2 (\alpha-1)}} \max_{\{\epsilon_i\}}  \sum_{i=1}^G \ln\frac{1}{1-\epsilon_i} \nonumber \\
\mbox{ s.t. } \sum_{i=1}^G \epsilon_i \le 2\ln\frac{1}{1-\epsilon L}.
\end{align}

Note that \(-\ln(1-x) = x + \frac{x^2}{2} + \frac{x^3}{3} + \cdots \le x + x^2 + x^3 + \cdots = \frac{x}{1-x},\)
and hence, for getting an upper-bound, it is enough to maximize $ \sum_{i=1}^G \frac{\epsilon_i}{1-\epsilon_i}$.

As $\frac{x}{1-x}$ is convex on $[0,1)$, the maximization is obtained at an extreme point, i.e., $\epsilon_1=2\ln\frac{1}{1-\epsilon L}$ and $\epsilon_i=0$ for $i\ge 2$. Thus, the bound becomes
\( G I_c(\mc{N}^{\otimes g};\alpha) + \frac{\alpha}{{ \ln 2 (\alpha-1)}} \frac{2\ln\frac{1}{1-\epsilon L}}{1-2\ln\frac{1}{1-\epsilon L}}.\)

Under the conditions in Theorem~\ref{thm:lipschitzLB}, $\frac{2\ln\frac{1}{1-\epsilon L}}{1-2\ln\frac{1}{1-\epsilon L}}\le \frac{1}{3}$ and hence, the upper bound on $d$ is $\frac{\alpha}{{ 3 \ln 2 }(\alpha-1)}$, whenever $I_c(\mc{N}^{\otimes g};\alpha)=0$.

Thus, for noise strengths that result into $I_c(\mc{N}^{\otimes g};\alpha)=0$, fault-tolerant computation is impossible for \(d \ge \frac{\alpha}{{ 3 \ln 2 }(\alpha-1)},\)
i.e., there is an upper bound on the number of (logical) qubits that can be computed with $\epsilon$-accuracy.

It is known that $I_c(\mc{N}^{\otimes g};\alpha) \downarrow I_c(\mc{N}^{\otimes g})$ as $\alpha \downarrow 1$ \cite{KhatriW2020book} and hence, it can be possible to take the impossibility threshold arbitrarily close to $I_c(\mc{N}^{\otimes g})=0$.

The fact $I_c(\mc{N}^{\otimes g};\alpha) \downarrow I_c(\mc{N}^{\otimes g})$ as $\alpha \downarrow 1$ implies for $\alpha\le \alpha_{th}$, for some $\alpha_{th}>1$,  
\(\frac{I_c(\mc{N}^{\otimes g})}{g} \le \frac{I_c(\mc{N}^{\otimes g};\alpha)}{g} \le \sup_{k \ge 1} \frac{I_c(\mc{N}^{\otimes k};\alpha)}{k}.\)

This implies that no computation with number of qubits $d > \frac{\alpha_{th}}{{ 3 \ln 2 } (\alpha_{th}-1)}$ is possible for certain noise strengths, which include the noise strengths for which the quantum capacity is zero, i.e., $\sup_{k \ge 1} \frac{I_c(\mc{N}^{\otimes k};\alpha)}{k}$ $=0$.

%\section{Towards a Second Order Bound}
%\label{sec:second-order}
%Here, we outline an approach for obtaining a tighter second order bound by using the condition in \eqref{eq:cri3} differently.
%
%By building on techniques developed for one-shot quantum communication \cite[Chap.~8 and 9]{KhatriW2020book}, it can be shown that for ensuring the criteria in  \eqref{eq:cri3}, $d$ must satisfy the following condition. For $\bar{\epsilon}=\epsilon L$, 
%\begin{align}
%d \le \sup_{\{\rho_i^{(g)}\}} \inf_{\omega_i^{(g)}} D^{\bar{\epsilon},H}\left(\otimes_{i=1}^G \mc{N}^{\otimes g}\left(\rho_i^{(g)}\right) \vert \vert \otimes_{i=1}^G \omega_i^{(g)}\right),
%\end{align}
%where, $D^{\bar{\epsilon},H}$ is the $\bar{\epsilon}$-hypothesis testing relative entropy \cite{KhatriW2020book}.
%
%Next, using symmetry, one may argue that the supremum is obtained when $\rho_i^{(g)}$,  and $\omega_i^{(g)}$ are the same for $1\le i \le G$. This would lead to the following bound on $d$:
%\begin{align}
%d \le \sup_{\rho^{(g)}} \inf_{\omega^{(g)}} D^{\bar{\epsilon},H}\left(\left(\mc{N}^{\otimes g}\left(\rho^{(g)}\right)\right)^{\otimes G} \vert \vert \left(\omega^{(g)}\right)^{\otimes G}\right),
%\end{align}
%Finally, using \cite[Eq. 6]{TomamichelH2013} for $D^{\bar{\epsilon},H}\left(\left(\rho^{(g)}\right)^{\otimes G} \vert \vert \left(\omega^{(g)}\right)^{\otimes G}\right)$, one would be able to obtain a second order upper bound on $d$ in terms of $G$, which would lead to a tighter second order lower bound for $N$.

\bibliographystyle{plain} 
\bibliography{NoisyQuantum}
\end{document}